\def\begmat{\left(\begin{array}}\def\endmat{\end{array}\right)}
\def\bi{\begin{itemize}\setlength{\itemsep}{0pt}} \def\ei{\end{itemize}}
\def\bl{\begin{list}{\labelitemi}{\leftmargin=1em}\setlength{\itemsep}{-2.5pt}}  \def\el{\end{list}}
\def\bn{\begin{enumerate}} \def\en{\end{enumerate}}
\def\bt{\begin{table}[h]} \def\et{\end{table}}
\def\bc{\begin{center}} \def\ec{\end{center}}
\newcommand{\abs}[1]{\left\vert#1\right\vert}
\newtheorem{theorem}{Theorem}[section]
\newtheorem{lemma}[theorem]{Lemma}
\theoremstyle{plain}
\theoremstyle{plain}
\theoremstyle{remark}
\theoremstyle{plain}
\newcommand \be{\begin{equs}}
\newcommand \ee{\end{equs}}
\begin{document}

\title{Variable Selection Using Shrinkage Priors}
\author{Hanning Li, \quad Debdeep Pati  \\
Department of Statistics  \\
Florida State University, Tallahassee, FL 32306 \\
email: \texttt{h.li@stat.fsu.edu, debdeep@stat.fsu.edu}}
\maketitle

\begin{center}
\textbf{Abstract}
\end{center}
Variable selection has received widespread attention over the last decade as we routinely encounter high-throughput datasets in complex biological and environment research.  Most Bayesian variable selection methods are restricted to mixture priors having separate components for characterizing the signal and the noise.
  However,  such priors encounter  computational issues in high dimensions.
  This has motivated continuous shrinkage priors, resembling the two-component priors facilitating computation and interpretability.  While such priors are widely used for estimating high-dimensional sparse vectors,  selecting a subset of variables remains a daunting task.  In this article, we propose a  general approach for variable selection with  shrinkage priors.  The presence of very few tuning parameters makes our method attractive in comparison to adhoc thresholding approaches.  The applicability of the approach is not limited to continuous shrinkage priors, but can be used along with any shrinkage prior.  Theoretical properties for near-collinear design matrices are investigated and the method is shown to have good performance in a wide range of synthetic data examples.

%

\vspace*{.3in}

\noindent\textsc{Keywords}: {Bayesian; Horseshoe; Markov Chain Monte Carlo;  Shrinkage priors; Variable selection}

\section{Introduction}
Variable selection in high-dimensional models has received substantial interest in recent years \citep{fan2010selective} and is a challenging problem for Bayesians.
With rapid advances in data acquisition and storage techniques, modern scientific investigations in epidemiology, genomics, imaging and networks are increasingly producing more variables compared to the sample size. One hope for meaningful inferences in such situations is to discover a subset of variables that explains the physical or biological process generating the data. Exploiting such underlying structures, commonly prevalent in the form of sparsity of model parameters, holds the key to meaningful inferences in high-dimensional settings.
This article revisits the problem of Bayesian variable selection in the context of Gaussian linear models \eqref{eq:model} using shrinkage priors:
\begin{eqnarray}\label{eq:model}
Y=X\beta+\epsilon, \quad \epsilon\sim N(0,\sigma^2 I_n),
\end{eqnarray}
where $Y$ is an $n$-dimensional response observed with respect to the $n\times p$ covariate matrix  $X$ and $\beta$ is the $p$-dimensional coefficient vector. Traditionally, to select the important variables out of $X_1, \ldots, X_p$,  a two component mixture prior (also referred to as a spike-and-slab prior)  \citep{mitchell1988bayesian,george1993variable,george1997approaches} is placed on $\beta$. These priors include a mass or a spike at zero characterizing the noise and a continuous density (usually centered at zero) representing the signal density.  
Although these priors are highly appealing in allowing separate control of the level of sparsity and the size of the signal coefficients, they lead to   computational hurdles in high-dimensions due to the need to explore a $2^p$ model space.  \cite{johnson2012nonlocal} recently showed a startling selection inconsistency phenomenon for several commonly used spike-and-slab priors based on intrinsic Bayes factors \citep{berger1996intrinsic}, fractional Bayes factors \citep{o1995fractional}, and g-priors \citep{liang2008mixtures} when the model size $p \gtrsim \sqrt{n}$.  This behavior was attributed to the common practice of centering the prior on the signal component at zero (local prior), which obliterates the demarcation between the signal and noise in high dimensions and leads to negligible posterior probability being assigned to any given model.   \cite{johnson2012nonlocal} advocated the use of non-local priors to obtain  selection consistency when $p=O(n)$, where the density for the signals decays to zero in a neighborhood of the origin.  When $p \gg n$,  it is not immediately clear whether non-local prior distributions can provide sufficient distinguishability between the signals and the noise coefficients.

Nevertheless, the practical problem of selecting variables has been a major bottleneck even with spike-and-slab priors. Although the highest posterior probability model (HPPM) is commonly perceived as the best model \citep{clyde1999bayesian,clyde1999empirical}, it is not optimal for prediction \citep{barbieri2004optimal} since HPPM is the Bayes estimate only under 0-1 loss function.  Moreover, finding HPPM  in high-dimensions  is  computationally demanding since the MCMC can only visit a minute fraction of the $2^p$ model space even for a relatively large number of Gibbs iterations.   To circumvent these issues, \cite{barbieri2004optimal} proposed the median probability model (MPM)  defined as the model consisting of those variables which have an overall posterior probability of inclusion greater than or equal to $1/2$. Although this is the optimal predictive model,  \cite{ghosh2014bayesian} found that summaries of the posterior distribution based on marginal and joint distributions may give conflicting results for assessing the importance of strongly correlated covariates.

Computational issues and considerations that many of the $\beta_j$s may be small but not
exactly zero have led to a rich variety of continuous shrinkage priors being proposed recently \citep{park2008bayesian,tipping2001sparse,griffin2010inference,carvalho2010horseshoe,carvalho2009handling,bhattacharya2014dirichlet}, which can be unified through a global-local (GL) scale mixture representation of \cite{polson2010shrink} below,
\begin{eqnarray}\label{eq:gl}
\beta_j \sim \mbox{N}(0, \lambda_j \tau), \quad \tau \sim f, \quad \lambda_j \sim g,
\end{eqnarray}
where $f$ and $g$ are densities on the positive real line.
In \eqref{eq:gl}, $\tau$ controls global shrinkage towards the origin while the local parameters $\lambda_j$s allow deviations in the degree of shrinkage. Special cases include Bayesian lasso \citep{park2008bayesian}, relevance vector machine \citep{tipping2001sparse}, normal-gamma mixtures \citep{griffin2010inference} and the horseshoe \citep{carvalho2010horseshoe,carvalho2009handling} among others.  GL priors potentially have substantial computational advantages over mixture priors, since the normal scale mixture representation allows for conjugate updating of $\beta$ and $\lambda$ in blocks. Moreover, a number of frequentist regularization procedures such as ridge, lasso, bridge and elastic net correspond to posterior modes under GL priors with appropriate choices of $f$ and $g$.

The literature on model selection with continuous shrinkage priors is even less-developed due to the unavailability of exact zeros in the posterior samples of $\beta$.  Heuristic methods based on thresholding the posterior mean/median of $\beta$ are often used in practice which lack theoretical justification, and inference is highly sensitive to the choice of the threshold.  There is a recent literature on decoupling shrinkage and selection \citep{hahn2015decoupling,bondell2012consistent,vehtari2002bayesian}, which poses the problem of selection as a loss function based decision rather than inducing sparsity through a prior distribution. 
Another naive way to select variables using a shrinkage prior is to see whether the posterior credible interval contains zero or not. Such a method usually has a poor performance because it is very difficult to estimate the uncertainty accurately in high dimensional problems.

In this article, we aim to address the problem of selecting variables through a novel method of post processing the posterior samples. The approach is based on first obtaining a posterior distribution of the number of signals by clustering the signal and the noise coefficients and then estimating the signals from the posterior median.  This simple approach requires very few tuning parameters and is shown to have excellent performance relative to both existing frequentist and Bayesian approaches.  Moreover, the method is not only applicable to continuous shrinkage priors, but also can be used along with any shrinkage prior for $\beta$ after a full MCMC run.   For the ease of exposition, we focus on the spike-and-slab prior and the horseshoe prior and compare the performances using HPPM, MPM and the credible set approach for variable selection.  Interestingly,  in the presence of high collinearity among the covariates, we demonstrated better performance  when the horseshoe prior is used in conjunction with our selection procedure.  


The organization of the present paper is as follows. Section \ref{sec:methods} describes the variable selection algorithm.  Theoretical properties for collinear design matrices are considered in Section \ref{sec:theory}.   Section \ref{sec:sims} contains detailed comparisons in synthetic data. A discussion is provided in Section \ref{sec:disc}.
\section{Methodology}\label{sec:methods}

Our objective is to develop an algorithm to select the important variables based on the posterior samples of $\beta$ obtained from the Markov Chain Monte Carlo (MCMC) samples  in \eqref{eq:model} when a  shrinkage prior is placed on $\beta$.  The algorithm is independent of the prior for $\beta$, but dependent on the linear model with additive error assumption in \eqref{eq:model}.   Unlike existing approaches, the method involves very few tuning parameters, hence readily suitable for future use of practitioners.  Our idea is based on finding the most probable set of variables in the posterior median of $\beta$.  Since the distribution of the number of important variables is more stable and is largely unaffected by the mixing of the MCMC,  we propose to first find the mode $H$ of the distribution of number of important variables and then select the $H$ largest coefficients from the posterior median of $\abs{\beta_j}$.

\subsection{2-means (2-M) variable selection}
We expect two clusters of $\abs{\beta_j}$, with one concentrated closely near zero corresponding to noise variables and the other one away from zero corresponding to the signals.  As an automated approach, we cluster $| \beta_j|$s at each MCMC iteration using {\em k-means} with $k=2$ clusters.  For the $i$th iteration, the number of non-zero signals $h_i$ is then estimated
by the smaller cluster size out of the two clusters. A final estimate ($H$) of the number of non-zero signals is obtained by taking the mode over all the MCMC iterations, i.e., $H=\mbox{mode}\{h_i\}$. The $H$ largest entries of the posterior median of $\abs{\beta}$ are identified as
 the non-zero signals.

When the true coefficient vector $\beta_T$ has signal coefficients varying in the signal strengths,  2-M variable selection approach described above may inappropriately cluster the smaller signals together with the noise variables. A possible solution is to use different values of the number of clusters $k$, but it is usually difficult and time-consuming to find the optimal value of $k$. To deal with this, we propose a simple modification called the {\em sequential 2-means} below.

\subsection{Sequential 2-means ($\mbox{S}_2\mbox{M}$) variable selection}

We start with a few notations.  Define $S_T$ to be the indices of non-zero signals in $\beta_T$ and $S_E$ to be the indices of the selected covariates.  To assess the efficacy of a variable selection procedure, we introduce two types of errors a) $\abs{S_T \cap S_E^c}$: masking error (also called `false negatives'), and b)  $\abs{S_E\cap S_T^c}$: swamping error (also called `false positives').

When $\beta_T$ has different levels of signal strengths, the 2-M variable selection approach will have a high chance of incurring masking error. In other words, it is highly likely that some true signals with low signal strengths will be clustered with the noise coefficients even when the corresponding $\beta_j$s are estimated well.
Our main motivation to propose the sequential 2-means ($\mbox{S}_2\mbox{M}$) variable selection approach is to reduce the probability of masking error.   Let $b > 0 $ be a tuning parameter, then $S_2M$ is defined as below. At the $i$th iteration of MCMC:
\begin{enumerate}[I]
\item perform a 2-means on $|\beta_j|, j=1,2,\ldots,p$. Denote the two cluster means by $m$ and $M$ ($m \leq M$). Initialize a set $A$ with an empty set. While the difference $M-m$ is greater than $b$:
\begin{enumerate}
\item update $A$ to be all the indices from the cluster with the lower mean $m$;
\item perform a 2-means on $\abs{\beta_j},j\in A$;
\item update $m$ and $M$ to be two cluster means ($m \leq M$) obtained from (b).
\end{enumerate}
\item The set $A$ is considered to contain coefficients of noise covariates. So the estimated number of signals $h_i$ is $p-\abs{A}$.
\end{enumerate}
The above algorithm is repeated for all MCMC samples of $\beta$ and the final estimates of the number of signals and the coefficients corresponding to signals and the noise variables are obtained in the same way as in the 2-M algorithm.

Using an appropriate tuning parameter $b$, $\mbox{S}_2\mbox{M}$ is capable of reducing the chance of masking error. However, note that this occurs at the cost of an increased probability of swamping error.  
A larger value of $b$ tends to increase  the masking error, while a smaller value of $b$ increases swamping error.
Hence one should choose $b$ so that the sum of the two errors is minimized.  In order to assess the important factors influencing the choice of $b$, let us first consider a noise-free version of the model in \eqref{eq:model} and hypothesise the ideal situation when the Bayesian procedure provides exactly accurate estimation of $\beta_T$ at every iteration. Then it is not hard to conclude any value of $b$ between 0 and the lowest absolute signal strength leads to correct variable selection. However, in presence of noise,  it is impossible for any method to produce an exact estimate of $\beta_T$. In addition, using continuous shrinkage priors, the estimated coefficients for the noise variables will never be exactly zero. As the noise level $\sigma$ increases,  the estimates for the coefficients become more variable.  This makes it more likely for the non-zero coefficients of lower signal strength to be clustered with the noise coefficients leading to an increase in the masking error.  Hence we  believe a proper value for $b$ should be  based on the posterior estimate of $\sigma$.  However, the estimate of $\sigma$ in \eqref{eq:model}  is affected by i) the true noise level ii) collinearity in the design matrix iii) as well as the {\em ill-posed}-ness  of the high-dimensional regression problem i.e., how large $p$ is compared to $n$.  These key factors contribute to the accuracy of the selection procedure.


The tuning parameter $b$ should be chosen to be an increasing function of the estimated $\sigma^2$ to take into account the increased variability in the estimates of the noise coefficients.
 The variable selection results using $\mbox{S}_2\mbox{M}$ approach with a  large threshold will surely be no worse than that the previously stated 2-M approach.
 Through various simulation with different settings, we have observed using $2$ times posterior median of $\sigma^2$ ($b=2\hat{\sigma}^2$) results in accurate estimation of the number of signals $H$.  Using $3\hat{\sigma}^2$ very often results in masking error,  while using $ \hat{\sigma}^2$ always selects many noise coefficients to be signals. In practice,  we suggest to use $b =  2\hat{\sigma}^2$ to reduce both masking and swamping errors.
 A higher value for $b$ might be necessary if the number of selected active covariates obviously exceeds the expected number of signals.  On the other hand,  a smaller value of $b$ would be desirable if  more covariates are expected to be active.


%

\section{Dealing with correlated predictors using continuous shrinkage priors}\label{sec:theory}
In presence of confounders which are highly correlated with an important predictor,  it is crucial that a variable selection method
 can identify the true predictor.   \cite{bhattacharya2014dirichlet,bhattacharya2012bayesian} recently showed that  a  global-local  shrinkage prior \eqref{eq:gl} achieves better concentration around sparse vectors in comparison with shrinkage prior based on only global scale i.e., setting $\psi_j \equiv 1$ in \eqref{eq:gl}.   In this section, we show that  such observations extend to the case of variable selection.  While a ``global-only'' shrinkage prior fails to select the true variables under moderate correlation, an appropriately constructed  global-local  shrinkage prior can achieve desirable variable selection even under high correlation.  

 For the ease of understanding the behavior of continuous shrinkage priors under correlation, consider only two covariates where variable 1 is the important predictor and variable 2 is the confounder.
 Let  $X'X = [1 \, \rho; \,\rho \, 1]$ with $\rho$ characterizing the correlation between the two predictors.  Assume $\hat{\beta}_{MLE, j}$ and $\hat{\beta}_{S, j}$ are the maximum likelihood estimate and  posterior mean of $\beta_j,  j =1,2$ respectively.   We empirically observe that MLE underestimates the signal coefficient $\beta_1$ and over-estimates the confounder $\beta_2$ under high correlation.  Ideally, a shrinkage prior should counter-balance this effect allowing the corresponding posterior estimates to be well-separated, thus facilitating variable selection. We define a terminology called {\em reverse-shrinkage}  to describe this.   A prior is said to satisfy reverse-shrinkage if  $|\hat{\beta}_{MLE,1}|\geq|\hat{\beta}_{MLE,2}|$ implies $|\hat{\beta}_{S,1} / \hat{\beta}_{S,2}| \geq |\hat{\beta}_{MLE,1}/ \hat{\beta}_{MLE,2}|$. Suppose we can write a Bayes estimator under a shrinkage prior  as a function of the MLE: $\hat\beta_{S,j}=(1-S_1(\hat\beta_{MLE}))\hat\beta_{MLE,j}$, $j=1,2,...,p$, where the shrinkage $S_j$ is less than 1. Then reverse-shrinkage means a larger magnitude of MLE results in a smaller shrinkage.  Clearly, this is a desirable phenomenon for any variable selection approach.  

\begin{theorem}\label{Theorem1}
Suppose $\beta_i\sim N(0,\sigma^2\tau^2), i=1,2$ in \eqref{eq:model}, and the $n \times 2$ covariate $X$ satisfies $X'X =  [1 \, \rho; \,\rho \, 1]$
where $\rho\in (0,1)$. Then if $|\beta_{MLE,1}|>|\beta_{MLE,2}|$,
\begin{eqnarray*}\label{theorem1res}
\abs{\frac{\hat\beta_{N,1}}{\hat\beta_{N,2}}}  <
\abs{\frac{\hat\beta_{MLE,1}}{\hat\beta_{MLE,2}}},
\end{eqnarray*}
for any $\tau$ and $n>2$, where $\hat\beta_N$ denotes the posterior mean.
\end{theorem}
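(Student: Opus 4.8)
\noindent The plan is to turn the statement into an elementary inequality about a $2\times 2$ matrix. Since $n>2$ and $X'X=[1\,\rho;\,\rho\,1]$ is positive definite, $X$ has full column rank, so the MLE coincides with the least-squares estimate and $X'Y=(X'X)\hat\beta_{MLE}$. Under the conjugate prior $\beta\sim N(0,\sigma^2\tau^2 I_2)$ the posterior of $\beta$ is Gaussian with mean
\[
\hat\beta_N=(X'X+\tau^{-2}I_2)^{-1}X'Y=(X'X+\tau^{-2}I_2)^{-1}(X'X)\,\hat\beta_{MLE},
\]
and $\sigma^2$ cancels, so nothing depends on the noise level. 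First I would invert $X'X+\tau^{-2}I_2$ explicitly and multiply out to get
\[
\hat\beta_{N,1}=a\,\hat\beta_{MLE,1}+b\,\hat\beta_{MLE,2},\qquad \hat\beta_{N,2}=b\,\hat\beta_{MLE,1}+a\,\hat\beta_{MLE,2},
\]
with $a=\frac{1+\tau^{-2}-\rho^2}{(1+\tau^{-2})^2-\rho^2}$ and $b=\frac{\rho\,\tau^{-2}}{(1+\tau^{-2})^2-\rho^2}$. Using $\rho\in(0,1)$ and $\tau>0$ one checks $a>b>0$ (the numerator of $a-b$ factors as $(1-\rho)(1+\rho+\tau^{-2})$) and $a+b=\frac{1+\rho}{1+\tau^{-2}+\rho}<1$, which records that this prior genuinely shrinks.

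\noindent Next I would pass to the scalar $r=\hat\beta_{MLE,1}/\hat\beta_{MLE,2}$, which satisfies $|r|>1$ by hypothesis; in the confounder setup the two estimates share a sign, and after replacing $(\hat\beta_{MLE,1},\hat\beta_{MLE,2})$ by its negative if needed (which changes neither ratio) we may take $r>1$. Then $\hat\beta_{N,1}/\hat\beta_{N,2}=g(r):=\frac{ar+b}{br+a}$, and, since $br+a>0$ for $r>0$, the claim $|g(r)|<|r|$ is equivalent to $r^{2}(br+a)^{2}>(ar+b)^{2}$, i.e.\ to positivity of
\[
(br^{2}+ar)^{2}-(ar+b)^{2}=b\,(r^{2}-1)\,(br^{2}+2ar+b).
\]
For $r>1$ every factor on the right is strictly positive ($b>0$, $r^{2}-1>0$, and $br^{2}+2ar+b>0$ because $a,b>0$), which gives the result. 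One also checks $\hat\beta_{N,2}=b\,\hat\beta_{MLE,1}+a\,\hat\beta_{MLE,2}\neq0$ in this regime, so all ratios are well defined, the degenerate case $\hat\beta_{MLE,2}=0$ being immediate.

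\noindent The $2\times 2$ inversion and the factorization are routine; the real care lies in the sign bookkeeping. The displayed identity shows the inequality is genuinely sign-sensitive — the quadratic $br^{2}+2ar+b$ can be negative for some $r<-1$ — so the step that must be justified rather than assumed is why the relevant MLEs carry the same sign (equivalently, why $r$ stays outside the interval on which that quadratic is negative), using the confounder structure described just before the theorem. Everything else is controlled by the elementary bounds on $a$ and $b$ above.
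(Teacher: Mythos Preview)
Your approach is correct and considerably more elementary than the paper's. The paper works through three preparatory lemmas: it computes the marginal density $h(x)=\int N(x;\beta,\sigma^{2}(X'X)^{-1})\pi(\beta)\,d\beta$ explicitly, then invokes a representation of the posterior mean as $\hat\beta_{N,j}=(1-S_{j})\hat\beta_{MLE,j}$ with shrinkage factors $S_{j}$ built from the logarithmic derivatives of $h$ (via Proposition~1 of Griffin and Brown), then establishes sign and range properties of the resulting quantities $f_{1},f_{2},f_{3},S_{1},S_{2}$, and finally splits into the cases $S_{2}<0$ and $0<S_{1},S_{2}<1$ to compare $|1-S_{1}|$ with $|1-S_{2}|$. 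You bypass all of this by using conjugacy to write $\hat\beta_{N}=(X'X+\tau^{-2}I)^{-1}(X'X)\hat\beta_{MLE}$ directly, reading off the explicit $2\times 2$ entries $a>b>0$, and reducing the claim to the one-line factorization $b(r^{2}-1)(br^{2}+2ar+b)>0$. The paper's heavier machinery is really there because the same $h$-function route is reused for the horseshoe prior in the next lemma, where no closed-form posterior mean exists; for the Gaussian prior alone your direct computation is much cleaner and reaches the same conclusion.

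On the sign issue you flag: you are right that the inequality is sign-sensitive (your factorization makes this transparent, since $br^{2}+2ar+b$ is negative on a subinterval of $(-\infty,-1)$), and the paper's proof carries exactly the same tacit restriction --- its second lemma silently takes $\hat\beta_{MLE,1}=A\hat\beta_{MLE,2}$ with $A>1$ when computing $R_{1},R_{2}$, so only the same-sign case is ever treated. Your proposed fix, that the confounder structure forces a common sign, does not actually close the gap: a true $\beta_{2}=0$ does not prevent the noise from giving $\hat\beta_{MLE,2}$ the opposite sign. But this is a gap in the theorem as stated, not a defect of your argument relative to the paper's.
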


Hence, when for correlated predictors, shrinkage priors with only global shrinkage parameters are no better than using MLEs. Next, we turn our attention to global-local shrinkage priors.  We focus on the horseshoe prior for a fixed value of the global shrinkage parameter $\tau$
\begin{eqnarray}\label{eq:HS}
\beta_j \mid \lambda_j,  \tau \sim N(0, \lambda_j \tau), \quad \lambda_j \sim \sqrt{\mbox{Ca}^+(0, 1)}
\end{eqnarray}
where $\mbox{Ca}^+(0, 1)$ denotes the standard half-Cauchy distribution with pdf  $2/ \{ \pi(1+ x^2)\}$ for $x > 0$.

With $X'X$ defined in Theorem \ref{Theorem1}, we write the HS estimators as functions of MLEs in Lemma \ref{hforhs} (see Appendix). More precisely, the HS estimators are functions of $\rho$, $\tau$, $\hat\beta_{MLE,2}$, and $A = \abs{\hat{\beta}_{MLE,1}/ \hat{\beta}_{MLE,2}}$. Given values for these parameters, we calculate the approximate values of $\abs{\hat\beta_{HS,1}/\hat\beta_{HS,2}}$ and $\abs{\hat\beta_{MLE,1}/\hat\beta_{MLE,2}}$ in Matlab to see whether reverse-shrinkage occurs.

Through the following figure, we will show the horseshoe prior can be made to satisfy  the reverse-shrinkage property by suitably choosing $\tau$.
 Figure \ref{fig:test} provides numerical analysis with $\hat\beta_{MLE,2} = 1$ and $1.5$, for different values of $\rho\in[0.94,0.99]$, $\tau\in(0,1)$, and $A>1$. Blue / red  dots indicate reverse-shrinkage / lack of it. Figure \ref{fig:test} shows reverse-shrinkage is more likely to occur when there are a smaller value of $\rho$, a greater value of $A$ (these two observations are expected) and a smaller value of $\tau$ present. The $\rho$ and $A$ are directly dependent on data, however, by choosing $\tau$ carefully chosen, it is possible to increase the possibility of achieving reverse-shrinkage.   Clearly, for values of $\rho$ close to 1, the horseshoe prior with large values of $\tau$ is less prone to achieve the reverse-shrinkage compared to smaller values of $\tau$.
 In practice, we suggest to have an upper bound for the global hyperparameter $\tau$ when updating it in a sampler.

\begin{figure}[htbp!]
\centering
\begin{subfigure}{.5\textwidth}
  \centering
 \includegraphics[width=9cm,height=6cm]{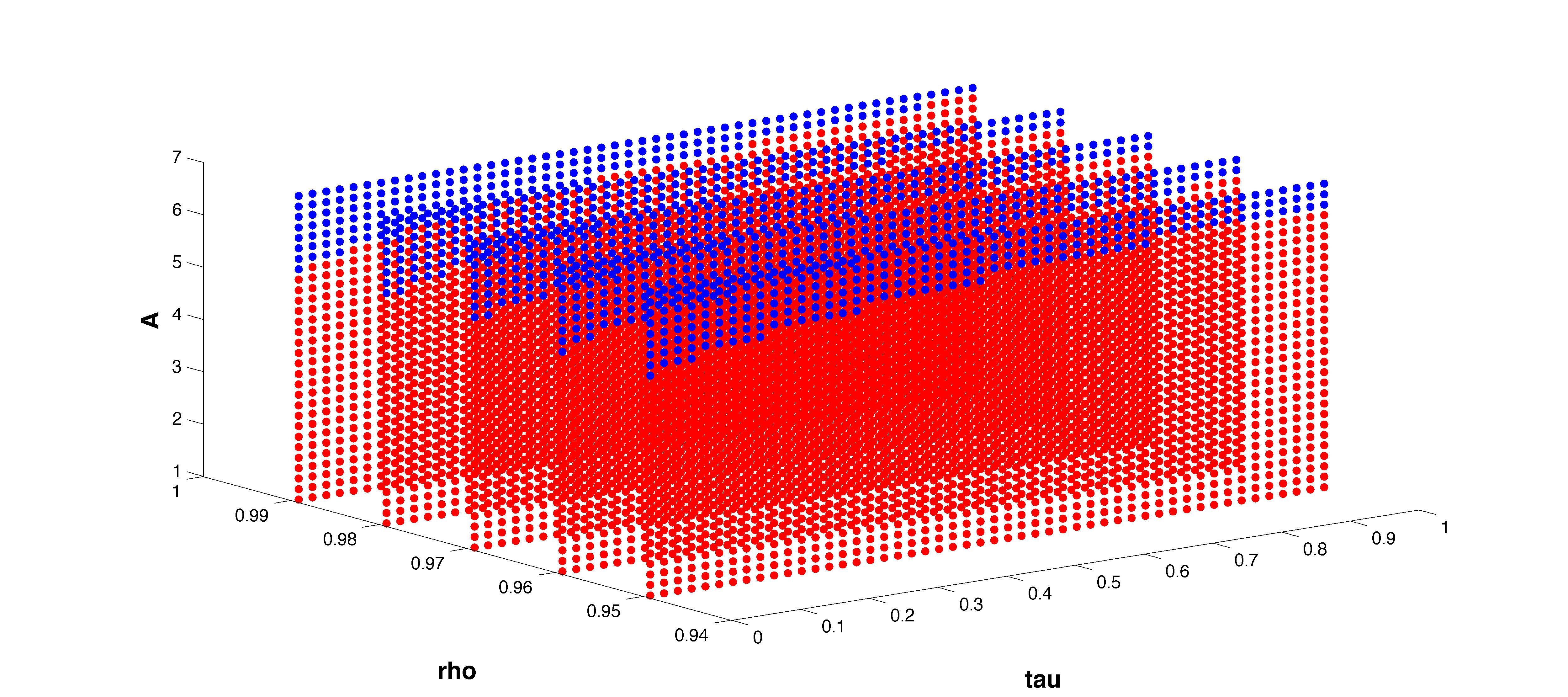}
  \caption{$\hat{\beta}_{MLE,2} = 1$}
  \label{fig:sub1}
\end{subfigure}%
\begin{subfigure}{.5\textwidth}
  \centering
 \includegraphics[width=9cm,height=6cm]{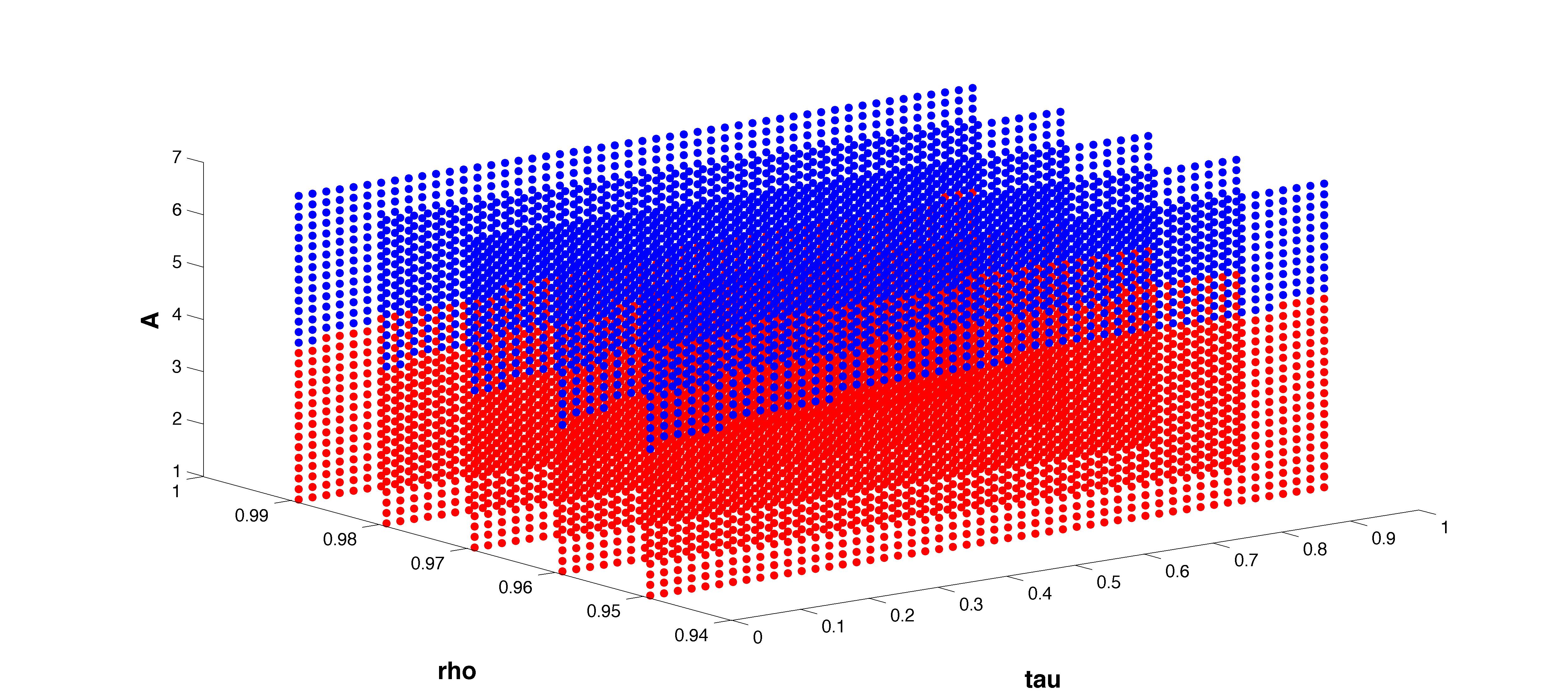}
  \caption{$\hat{\beta}_{MLE,2} = 1.5$ }
  \label{fig:sub2}
\end{subfigure}
\caption{+ve reverse-shrinkage (blue), -ve reverse-shrinkage (red) for various $A, \rho$ and $\tau$ }
\label{fig:test}
\end{figure}

\section{Simulation Study}\label{sec:sims}
 Our principle goal in this section is to compare the performance of the methods we proposed, i.e. $\mbox{S}_2\mbox{M}$ and 2-M, with other competing methods in terms of variable selection, especially when there is high collinearity present among the covariates. We consider the horseshoe (HS),  the spike-and-slab (SS) and the adaptive Lasso (AL) \citep{zou2006adaptive}.
 For SS, we used
 \begin{eqnarray*}
 \beta_j|\pi,\sigma_j^2,\sigma \sim \pi \delta_0  + (1- \pi) N(0, \sigma^2 \sigma_j^2), \quad \sigma_j^2 \sim \mbox{IG}(3/2, 3/2), \,1-\pi \sim \mbox{Beta}(1, 15), \,  \sigma^2 \sim \mbox{IG}(3/2, 3/2).
  \end{eqnarray*}
 The HS is used as in \cite{carvalho2010horseshoe}
 \begin{eqnarray}\label{eq:HS}
 \beta_j \mid \lambda_j, \tau, \sigma^2  \sim  \mbox{N}(0, \lambda_j \tau \sigma^2),  \quad \lambda_j \sim \sqrt{\mbox{Ca}^+(0, 1)},\,  \tau \sim \sqrt{\mbox{Ca}^+(0, 1)}, \, \sigma^2 \sim \mbox{IG}(3/2, 3/2).
 \end{eqnarray}
 \cite{carvalho2010horseshoe,carvalho2009handling}  suggested thresholding  posterior estimates of $\kappa_j = 1/ (1+ \lambda_j)$ at $1/2$ to perform  variable selection in HS. We refer to this as the Hypothesis Testing approach, abbreviated as HT. Although a suitable $\kappa_j$ can be defined in  the GL family \eqref{eq:gl},  we found HT to be most effective with HS.  It is important to reiterate here that $\mbox{S}_2\mbox{M}$ is independent of the prior for $\beta$.  We also used the credible set approach, abbreviated as CS. To implement AL,  we used the {\tt parcor} package in {\tt R}. For HS, the variable selection approaches tried are: $\mbox{S}_2\mbox{M}$, 2-M, CS and HT, while for SS we tried $\mbox{S}_2\mbox{M}$, 2-M, HPPM and MPM. In the first two examples, results are provided by averaging over 25 replicates of response obtained using the same covariate matrix.  For the Bayesian procedures, the MCMC was run for 5,000 iterations discarding a burn-in of 2,000. In all the examples, the tuning parameter $b$ is set to be $2\hat\sigma^2$. Convergence was monitored using standard tests and diagnostic trace plots.  In the following tables, the first number in each parenthesis is the masking error and the second number is the swamping error.

\subsection{Simulation example 1}
Four settings for sample size ($n$), the number of covariates ($p$), the number of signals ($r$) and signal strength ($B$) are (1) $n=50, p=300, r=10, B=4;$ (2)  $n=50, p=300, r=10, B=6$; (3) $n=100, p=800, r=20, B=4$ and (4) $n=100, p=800, r=20, B=6$. Under each setting, we considered an uncorrelated covariates setting ($uncor$) and a correlated covariates setting ($cor$).
Observations of the covariate are generated from standard normal distributions independently and an intercept is included.
The covariate matrix corresponding to $cor$ contains two pairs of correlated covariates, and in each pair, one is a signal while the other is a noise predictor. Both correlations are above 0.99.

In general,  HS out-performs  SS and AL significantly. We found the performance of our proposed method, $\mbox{S}_2\mbox{M}$ (or 2-M), is consistently better when used in conduction with HS, even when high collinearity is present between the covariates. CS often selects the wrong one between a highly correlated pair of covariates. When the difficulty of task is high, (see the 2nd and the 6th columns in Table.1), CS masked a larger fraction of the true signals. The performance of HS+HT is excellent as well. However, we have to note here HT is a variable selection method for priors from GL family only, while our proposed methods can be broadly applied with various priors.

\begin{table}[htb]
    \scriptsize
    \caption{Masking and Swamping errors}
    \centering

    \begin{tabular}{ccccccccc}
     \toprule
      $n,p,r$ & \multicolumn{4}{c}{$n=50, p=300,      r=10$}&
      \multicolumn{4}{c}{$n=100, p=800, r=20$}\\
      $B$& 4 & 4 & 6 & 6 & 4 & 4 & 6 & 6 \\
      & uncor & cor & uncor & cor & uncor & cor & uncor & cor \\
     \midrule
      HS+$\mbox{S}_2\mbox{M}$ &
      (0, 0)&(0.36, 0.36)&(0, 0)&(0.28, 0.28)&(0, 0)&(0.16, 0.16)&(0, 0)&(0.04, 0.04)\\
      HS+2M &(0, 0)&(0.36, 0.36)&(0, 0)&(0.28, 0.28)&(0, 0)&(0.16, 0.16)&(0, 0)&(0.04, 0.04)\\
      HS+CS &(0, 0)&(1.8, 0)&(0, 0)&(1.42, 0)&(0, 0)&(2.52, 0.02)&(0, 0)&(0.2, 0)\\
      HS+HT &(0, 0)&(0.4, 0.24)&(0, 0)&(0.24, 0.28)&(0, 0)&(0.2, 0.12)&(0, 0)&(0.04, 0.04)\\
      \midrule
      SS+$\mbox{S}_2\mbox{M}$
      &(0, 0)&(0.4, 0.4)&(0.6, 0.76)&(0.8, 1.08)&(0.8, 6.32)&(2.08, 5.56)&(4.84, 13.56)&(5.52, 16.32)\\
      SS+KM &(0, 0)&(0.4, 0.4)&(0.64, 0.48)&(0.8, 1.04)&(0.8, 6.32)&(1.96, 5.56)&(4.96, 13.44)&(5.48, 16.12)\\
      SS+HPPM &(0, 0.64)&(0.4, 1.32)&(0.48, 10.44)&(0.6, 4.2)&(4.96, 62.8)&(4.64, 57.36)&(7.96, 181.28)&(7.6, 179.6)\\
      SS+MPM &(0.04, 0.44)&(0.4, 0.8)&(0.52, 1.4)&(0.64, 1.72)&(4.16, 4.84)&(3.92, 4.96)&(7.2, 15.12)&(8, 15.8)\\
      \midrule
      AL
      &(0.44, 0.72)&(0.68, 2)&(0.4, 0.68)&(1, 0.92)&(0, 0.2)&(0.36, 0.52)&(0, 0)& (0.16, 0.32)\\
      \bottomrule
    \end{tabular}
    \end{table}

 SS fails to estimate accurately when the difficulty of task is high (see the right half of Table.1). Under this prior, the results using $\mbox{S}_2\mbox{M}$ (or 2-M) are slightly better than those using other methods, especially HPPM, which leads to large swamping errors. In addition, $\mbox{S}_2\mbox{M} +$  HS out-performs AL.

\subsection{Simulation example 2}
In this example, we compare $\mbox{S}_2\mbox{M}$ and 2-M, when the true coefficient vector $\beta_T$ contains different levels of signal strengths. We set $n = 50$ with $p=300$ and $r=10$. There are three 15s and seven 4s among the  10 signals. The $uncor$ and $cor$ are defined  as before.

The estimation in this simulation example is accurate, regardless of using HS, SS or AL. However, the difference between two levels of signal strength in $\beta_T$ is large enough to cause $\mbox{S}_2\mbox{M}$ and 2-M to have different performances.  $\mbox{S}_2\mbox{M}$ leads to excellent variable selection, while 2-M always masks all the 7 signals of lower signal strength. Again, in this example, $\mbox{S}_2\mbox{M}$ out-performs the competing methods, CS and HT, when the covariates matrix contains correlated covariates.

\begin{table}[htb]

    \caption{Masking and Swamping errors}
    \centering
    \begin{tabular}{ccccccccc}
     \toprule
      $n,p,r$& \multicolumn{2}{c}{$n=50, p=300, r=10$}\\
      & uncor & cor\\
     \midrule
      HS+$\mbox{S}_2\mbox{M}$ &(0,0)&(0.22,0.22)\\
      HS+2M &(7,0)&(7,0)\\
      HS+CS &(0,0)&(1.88,0)\\
      HS+HT &(0,0)&(0.26,0.22)\\
      \midrule
      SS+$\mbox{S}_2\mbox{M}$ &(0,0)&(0.64,0.64)\\
      SS+KM &(7,0)&(7,0)\\
      SS+HPPM &(0,0.5)&(0.6,1.24)\\
      SS+MPM &(0,0.26)&(0.64,0.94)\\
      \midrule
      AL & (0,0)& (0.74,0.76)\\
      \bottomrule
    \end{tabular}
    \end{table}
In addition, with five noise covariates correlated to (all correlations above 0.98) each of the two signal covariates, the two errors (under setting 2) with using HS are (0.46,0.46), (0.46,0.46), (2,0), and (1.16,0.12). $\mbox{S}_2\mbox{M}$ out-performs both CS and HT when there are more correlated covariates.

\subsection{Simulation example 3}
We consider the lymphoma dataset \citep{rosenwald2002use}, which consists of 240 observations and 7399 features representing 4128 genes from the Lymphochip cDNA microarray. We randomly selected 2000 features used as predictors, and randomly selected 30 of them to be signals. The response was simulated by the linear model (\ref{eq:model}), with standardized predictors, a coefficient vector $\beta$, and $\epsilon\sim N(0,1)$.
The coefficients of signal predictors in $\beta$ are 4 and the coefficients of noise predictors are 0.
. Among $2000^2$ pairs of predictors, there are 510 pairs with correlation above 0.8, 380 pairs above 0.90 and 323 pairs above 0.95.
We used the horseshoe prior with the four variable selection methods in the earlier two examples. MCMC was run for 10,000 iterations after discarding a burn-in of 5,000. We obtained the following pairs of errors (1,1), (1,1), (4,0), (2,1), corresponding to $\mbox{S}_2\mbox{M}$, 2-M, CS and HT respectively.

\section{Discussion}\label{sec:disc}
In this article, we developed a simple but useful method for doing variable selection using shrinkage priors by post-processing posterior samples of the regression coefficients.  Our method is essentially applicable to any prior and is based on only one tuning parameter.   We observe excellent performances of our method in terms of computational efficiency and dealing with correlated covariates.
The only tuning parameter associated with this method plays a key role to minimize the chance of masking while keeping the chance of swamping low as well.
Although our current proposal for the tuning parameter works well in most situations we tried,  we would like to explore a theoretically rigorous way to choose this tuning parameter in future.

The theoretical results are restricted to priors with only global shrinkage parameters.  Although we have provided several numerical analysis to better understand the shrinkage properties for the horseshoe prior when the covariates are highly correlated, we aim to study reverse-shrinkage more rigorously for the general class of global-local priors  \eqref{eq:gl}  in future.



\section{appendix}
\begin{lemma}\label{lemma1}
Define function $h$ as
\begin{equation} \label{functionH}
h(x)=\int_{\beta}N(x;\beta,\sigma^2(X'X)^{-1})\pi(\beta)d\beta.
\end{equation}
With normal priors on $\beta$ and $X'X$ as in Theorem 1, the function $h$ can be written as:
\begin{equation}\label{functionh}
h(x_1,x_2)=C\sigma^{-2}\sqrt{\frac{\kappa^2}{1-(1-\kappa)^2\rho^2}}\exp\bigg\{
\frac{1}{2\sigma^2}(f_1x_1^2+f_2x_2^2+2f_3x_1x_2) \bigg\},
\end{equation}\label{h(x,y)}
where $\kappa=1/ (1+\tau^2)$, $C$ is a constant independent from $\sigma, \tau, \rho, x_1, x_2$, and
\begin{equation}\label{f1f2f3}
f_1(\kappa;\rho)=f_2(\kappa;\rho)=\frac{(\rho^2-1-\rho^2\kappa)\kappa}{1-(1-\kappa)^2\rho^2} ,\quad
f_3(\kappa;\rho)=-\frac{\rho\kappa^2}{1-(1-\kappa)^2\rho^2}.
\end{equation}
\end{lemma}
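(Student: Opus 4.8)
\medskip
\noindent\emph{Sketch of proof.} The plan is to recognize $h$ as the prior-predictive (marginal) density of $x$ under the Gaussian hierarchy $x\mid\beta\sim N(\beta,\sigma^2(X'X)^{-1})$, $\beta\sim\pi$, and then carry out the resulting two-dimensional Gaussian integral explicitly. Since the prior in Theorem~\ref{Theorem1} makes $\beta_1,\beta_2$ independent $N(0,\sigma^2\tau^2)$, we have $\pi(\beta)=N(\beta;0,\sigma^2\tau^2 I_2)$, and, writing $N(x;\beta,\sigma^2(X'X)^{-1})=N(\beta;x,\sigma^2(X'X)^{-1})$, the integral in \eqref{functionH} is the normalizing constant of a product of two Gaussian densities in $\beta$. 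First I would invoke the standard Gaussian marginalization identity to conclude $h(x)=N(x;0,\Sigma)$ with $\Sigma=\sigma^2\{(X'X)^{-1}+\tau^2 I_2\}$. With $X'X=[1,\rho;\rho,1]$ we get $(X'X)^{-1}=\tfrac1{1-\rho^2}[1,-\rho;-\rho,1]$, so $\Sigma=\sigma^2 M$ where $M=[a,c;c,a]$ with $a=\tfrac1{1-\rho^2}+\tau^2$ and $c=-\tfrac{\rho}{1-\rho^2}$.

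Next I would pass to the parameter $\kappa=1/(1+\tau^2)$, i.e.\ $\tau^2=(1-\kappa)/\kappa$, under which $a=\{1-\rho^2(1-\kappa)\}/\{\kappa(1-\rho^2)\}$ while $c$ is unchanged, and then invert the $2\times2$ matrix: $\Sigma^{-1}=\sigma^{-2}M^{-1}$ with $M^{-1}=\tfrac1{a^2-c^2}[a,-c;-c,a]$. The exponent of $h$ is therefore $-\tfrac12\,x'\Sigma^{-1}x=\tfrac1{2\sigma^2}\cdot\frac{-a(x_1^2+x_2^2)+2c\,x_1x_2}{a^2-c^2}$, and comparing with \eqref{functionh} forces $f_1=f_2=-a/(a^2-c^2)$ and $f_3=c/(a^2-c^2)$. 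The one genuinely algebraic point is the identity $(1-\rho^2)\{1-(1-\kappa)^2\rho^2\}=\{1-\rho^2(1-\kappa)\}^2-\kappa^2\rho^2$, a one-line expansion; it yields $a^2-c^2=\{1-(1-\kappa)^2\rho^2\}/\{\kappa^2(1-\rho^2)\}$, and substituting this (together with $-\{1-\rho^2(1-\kappa)\}=\rho^2-1-\rho^2\kappa$) reproduces exactly the formulas for $f_1,f_2,f_3$ in \eqref{f1f2f3}. For the prefactor I would use $h(x)=(2\pi)^{-1}|\Sigma|^{-1/2}\exp\{\cdots\}$ with $|\Sigma|=\sigma^4(a^2-c^2)$, so that $|\Sigma|^{-1/2}=\sigma^{-2}\sqrt{\kappa^2(1-\rho^2)/\{1-(1-\kappa)^2\rho^2\}}$, which is of the stated form in \eqref{functionh} with $C$ collecting the constants free of $\sigma,\tau,x_1,x_2$ (the factor $\sqrt{1-\rho^2}$ being absorbed either inside the radical or into $C$, depending on the normalization convention used for $N(\cdot\,;\cdot,\cdot)$).

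I do not expect a conceptual obstacle, as this is a direct two-dimensional Gaussian computation. The step requiring the most care is the $\sigma^2$-bookkeeping: both the sampling covariance $\sigma^2(X'X)^{-1}$ and the prior covariance $\sigma^2\tau^2 I_2$ carry a factor $\sigma^2$, which is exactly what makes the final answer depend on $\tau$ only through $\kappa=1/(1+\tau^2)$; alongside this, verifying the polynomial identity above is the key algebraic step, since it is also what forces $f_1=f_2$ and puts their common denominator into the form $1-(1-\kappa)^2\rho^2$.
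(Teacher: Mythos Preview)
Your argument is correct, and it takes a cleaner route than the paper. The paper proceeds by brute force: it writes out the four-dimensional integrand explicitly, completes the square in $\beta_1$ and integrates it out, then completes the square in $\beta_2$ and integrates that out, tracking constants at each step and only at the end substituting $\kappa=1/(1+\tau^2)$. You instead recognise $h$ as the marginal density in the Gaussian hierarchy $x\mid\beta\sim N(\beta,\sigma^2(X'X)^{-1})$, $\beta\sim N(0,\sigma^2\tau^2 I_2)$ and invoke the standard closure property to write $h(x)=N(x;0,\Sigma)$ with $\Sigma=\sigma^2\{(X'X)^{-1}+\tau^2 I_2\}$ in one line, reducing the problem to the inversion of a single $2\times2$ matrix. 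What your approach buys is transparency: the symmetry $f_1=f_2$ is immediate from the circulant structure of $M$, and the only nontrivial step is the polynomial identity $(1-\rho^2)\{1-(1-\kappa)^2\rho^2\}=\{1-\rho^2(1-\kappa)\}^2-\kappa^2\rho^2$, which you isolate explicitly. The paper's sequential integration, while longer, has the virtue of being the template reused in Lemma~\ref{hforhs} for the horseshoe case, where the prior is no longer jointly Gaussian and the one-shot marginalization is unavailable. Your remark about the stray factor $\sqrt{1-\rho^2}$ in the prefactor is well taken; the paper's claim that $C$ is free of $\rho$ is slightly loose on this point, and your derivation makes the dependence explicit.
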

\begin{proof}
Through the calculations below, $C$ represents different constant numbers from step to step. However, $C$ is always independent from $\sigma, \tau, \rho, x_1$ and $x_2$.
\begin{eqnarray*}
\begin{split}
h(x_1,x_2)=\int_{\beta_1,\beta_2} &\frac{1}{2\pi\sigma^2}\exp\bigg\{
-\frac{1}{2\sigma^2}[
(x_1-\beta_1)^2+(x_2-\beta_2)^2+2\rho(x_1-\beta_1)(x_2-\beta_2)]\bigg\}\\
&\frac{1}{(2\pi\sigma^2\tau^2)}\exp\{
-\frac{1}{2\sigma^2\tau^2}(\beta_1^2+\beta_2^2)\} d\beta_1d\beta_2.\\
\end{split}
\end{eqnarray*}
First we try to integrate $\beta_1$ out, obtaining
\begin{equation}\label{star}
h(x_1,x_2)=C\int_{\beta_2}\frac{1}{\sigma^4\tau^2}\exp\begin{Bmatrix}
-\frac{1}{2\sigma^2}(x_1^2+x_2^2+\beta_2^2+\frac{1}{\tau^2}\beta_2^2-2x_2\beta_2+2\rho x_1x_2-2\rho x_1\beta_2)
\end{Bmatrix}I_1d\beta_2,
\end{equation}
where $I_1=\sqrt{2\pi\mu}\sigma\exp\begin{Bmatrix}
\frac{1}{2\sigma^2}B_1
\end{Bmatrix}$,
with $B_1^2=A_1^{-1}(\rho x_2+x_1-\rho \beta_2)^2$, $A_1^2=\frac{\tau^2+1}{\tau^2}$ and $\mu=\frac{\tau^2}{1+\tau^2}$.

Substitute $I_1$ and $B_1^2$ back to (\ref{star}), and continue to integrate (\ref{star}) with respect to $\beta_2$, obtaining
\begin{equation}\label{star2}
h(x_1,x_2)=C\sigma^{-3}\tau^{-2}\exp\begin{Bmatrix}
-\frac{1}{2\sigma^2}(x_1^2+x_2^2+2\rho x_1x_2)
\end{Bmatrix}\exp\begin{Bmatrix}
\frac{u}{2\sigma^2}(\rho x_2+x_1)^2
\end{Bmatrix}I_2,
\end{equation}
where
$I_2=\sqrt{2\pi}\sigma\exp\begin{Bmatrix}
\frac{1}{2\sigma^2}B_2^2
\end{Bmatrix}\sqrt{\frac{\mu}{1-\mu^2\rho^2}}$, with $B_2^2=A_2^{-2}[\mu\rho(\rho x_2+x_1)-(\rho x_1+x_2)]^2$ and $A_2^2=\frac{1}{\mu}-\mu\rho^2=\frac{1-\mu^2\rho^2}{\mu}$.

Substituting $I_2$ with $B_2^2$ back to (\ref{star2}), and letting $\kappa=1-\mu=\frac{1}{1+\tau^2}$, (\ref{functionh}) can be obtained.
\end{proof}


\begin{lemma}\label{lemma2}
Suppose $|\hat\beta_{MLE,1}|=A|\hat\beta_{MLE,2}|$, $\hat\beta_{MLE }=(\hat\beta_{MLE,1},\hat\beta_{MLE,2})$, then the normal estimators can be written as functions of the MLEs as:
\begin{equation}\label{lemma2res}
\begin{split}
&\hat\beta_{N,1}=\begin{bmatrix}
1-\frac{1}{1-\rho^2}\begin{pmatrix}
R_1(\hat\beta_{MLE})-\frac{\rho R_2(\hat\beta_{MLE})}{A}
\end{pmatrix}
\end{bmatrix}\hat\beta_{MLE,1} \\
&\hat\beta_{N,2}=\begin{bmatrix}
1-\frac{1}{1-\rho^2}\begin{pmatrix}
R_1(\hat\beta_{MLE})-\rho R_2(\hat\beta_{MLE}) A
\end{pmatrix}
\end{bmatrix}\hat\beta_{MLE,2}, \\
\end{split}
\end{equation}
where
$R_1(\hat\beta_{MLE})=-\frac{1}{A}(Af_1(\kappa)+f_3(\kappa))$ and
$R_2(\hat\beta_{MLE})=-(f_2(\kappa)+Af_3(\kappa))$.
\end{lemma}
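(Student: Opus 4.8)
The plan is to read off $\hat\beta_N$ from the explicit form of $h$ in Lemma \ref{lemma1} using the classical identity relating a posterior mean to the score of the marginal (prior-predictive) density (Tweedie's formula). Since $\hat\beta_{MLE}\mid\beta\sim N(\beta,\sigma^2(X'X)^{-1})$ and the MLE is sufficient, $\hat\beta_N=E[\beta\mid\hat\beta_{MLE}]$, which is the ratio of $\int\beta\,N(\hat\beta_{MLE};\beta,\Sigma)\pi(\beta)\,d\beta$ to $\int N(\hat\beta_{MLE};\beta,\Sigma)\pi(\beta)\,d\beta$ with $\Sigma=\sigma^2(X'X)^{-1}$. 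The denominator is exactly $h(\hat\beta_{MLE})$ by \eqref{functionH}.

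First I would rewrite the numerator via the Gaussian identity $\beta\,N(x;\beta,\Sigma)=x\,N(x;\beta,\Sigma)+\Sigma\,\nabla_x N(x;\beta,\Sigma)$, which follows from $\nabla_x N(x;\beta,\Sigma)=-\Sigma^{-1}(x-\beta)N(x;\beta,\Sigma)$. Integrating against $\pi$ and interchanging differentiation with integration (justified by the Gaussian tails of $N(\cdot;\beta,\Sigma)$ together with integrability of $\pi$), the numerator becomes $x\,h(x)+\Sigma\,\nabla_x h(x)$, so $\hat\beta_N=\hat\beta_{MLE}+\Sigma\,\nabla_x\log h(\hat\beta_{MLE})$. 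As a sanity check, in the one-dimensional case this reproduces the ridge shrinkage $(1-\kappa)\hat\beta_{MLE}$.

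Next I would substitute the form of $h$ from Lemma \ref{lemma1}: $\log h(x_1,x_2)$ is a constant plus $\frac{1}{2\sigma^2}(f_1x_1^2+f_2x_2^2+2f_3x_1x_2)$, hence $\nabla_x\log h=\sigma^{-2}(f_1x_1+f_3x_2,\;f_3x_1+f_2x_2)^{\T}$. The $\sigma^2$ factors cancel against $\Sigma=\sigma^2(X'X)^{-1}$, leaving $\hat\beta_N=\hat\beta_{MLE}+(X'X)^{-1}(f_1\hat\beta_{MLE,1}+f_3\hat\beta_{MLE,2},\;f_3\hat\beta_{MLE,1}+f_2\hat\beta_{MLE,2})^{\T}$. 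Using $(X'X)^{-1}=(1-\rho^2)^{-1}\bigl(\begin{smallmatrix}1&-\rho\\-\rho&1\end{smallmatrix}\bigr)$, multiplying out each component, and then factoring $\hat\beta_{MLE,1}$ (resp. $\hat\beta_{MLE,2}$) out of the first (resp. second) coordinate through $\hat\beta_{MLE,1}=A\,\hat\beta_{MLE,2}$, one obtains bracketed factors that are linear in $f_1,f_2,f_3$ with coefficients $1,\rho,1/A,A$; collecting these and comparing with $R_1=-\frac1A(Af_1+f_3)$ and $R_2=-(f_2+Af_3)$ yields \eqref{lemma2res}. The $1\leftrightarrow2$ symmetry of the problem (swapping $A\leftrightarrow1/A$ and, since $f_1=f_2$, interchanging $R_1$ and $R_2$) serves as a useful consistency check on the two formulas.

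The conceptual content is just the Tweedie identity plus Lemma \ref{lemma1}; the only point requiring care is the sign bookkeeping when replacing $\hat\beta_{MLE,1}$ by $A\,\hat\beta_{MLE,2}$, since $A$ is defined through absolute values — one either restricts to the case of equal signs or carries the sign along, but in either case the stated identities are unaffected because every term in the bracketed factors is even under a simultaneous sign flip of both coordinates. I expect the main (and still modest) obstacle to be organizing this final matching cleanly rather than any genuine difficulty.
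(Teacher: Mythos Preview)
Your approach is essentially the same as the paper's: both compute $\partial_{x_i}h/h$ from the explicit form of $h$ in Lemma~\ref{lemma1} and then invoke the Tweedie/Brown identity $\hat\beta_N=\hat\beta_{MLE}+\sigma^2(X'X)^{-1}\nabla_x\log h(\hat\beta_{MLE})$ to read off the shrinkage factors. The only difference is cosmetic---the paper cites this identity as Proposition~1 of \cite{griffin2010inference} while you derive it from scratch via $\nabla_x N(x;\beta,\Sigma)=-\Sigma^{-1}(x-\beta)N(x;\beta,\Sigma)$---and your remark on the sign convention for $A$ is a welcome clarification the paper omits.
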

\begin{proof}
Continuing the Lemma \ref{lemma1}, the derivatives of the function $h(x_1,x_2)$ with respective to $x_i$ is
\begin{align*}
\frac{\partial}{\partial x_i}h(x_1,x_2)=C\sigma^{-2}(f_ix_i+f_3x_{3-i})h(x_1,x_2)\ \ \ i=1,2.
\end{align*}
where $C$ here denotes the constant before the exponential in the function $h$. Define
\begin{eqnarray}\label{R1R2}
R_i^*(x_1,x_2)=-\frac{1}{x_i}\frac{\frac{\partial}{\partial x_i}h}{h}=-\frac{1}{\sigma^2x_i}(f_ix_i+f_3x_{3-i})\ \ \ i=1,2.
\end{eqnarray}
Considering $|\hat\beta_{MLE,1}|=A|\hat\beta_{MLE,2}|$, both $R_1^*(\hat\beta_{MLE})$ and $R_2^*(\hat\beta_{MLE})$ can be written as functions of $A$ as in Lemma.
Using the result of the Proposition 1 in \cite{griffin2010inference}, where
\begin{eqnarray*}
S(\hat\beta)=\sigma^2(X'X)^{-1}[R_1^*(\hat\beta_{MLE})\ 0;0\ R_2^*(\hat\beta_{MLE})]=(X'X)^{-1}[R_1(\hat\beta_{MLE}),R_2(\hat\beta_{MLE})]
\end{eqnarray*}
in this case, (\ref{lemma2res}) can be obtained.
\end{proof}

\begin{lemma}\label{lemma3}
Define
\begin{equation}
S_1=\frac{\begin{pmatrix}
R_1-\frac{\rho R_2}{A}
\end{pmatrix}}{1-\rho^2},
S_2=\frac{\begin{pmatrix}
R_2-\rho R_1A
\end{pmatrix}}{1-\rho^2}
\end{equation}
then $-1<f_1(\kappa;\rho)=f_2(\kappa;\rho)<f_3(\kappa;\rho)<0$,
$0<S_1<1,S_2<1$, for any $0<\rho<1$, $A>1$, and $\tau>0$.
\end{lemma}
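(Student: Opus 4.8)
The plan is to establish the two groups of inequalities separately, regarding $\kappa = 1/(1+\tau^2)\in(0,1)$, $\rho\in(0,1)$ and $A>1$ as the free parameters. Throughout I would write $D=D(\kappa,\rho):=1-(1-\kappa)^2\rho^2$ for the common denominator appearing in $f_1,f_2,f_3$. Since $0<(1-\kappa)^2\rho^2<\rho^2<1$, one gets $D>1-(1-\kappa)^2=\kappa(2-\kappa)>\kappa>0$; this elementary bound $0<\kappa/D<1$ is reused at the end.

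For $-1<f_1=f_2<f_3<0$: the equality $f_1=f_2$ is given by Lemma~\ref{lemma1}, and $f_3<0$ is immediate since $\rho,\kappa>0$ and $D>0$. For $f_1<0$, factor the numerator of $f_1$ as $\kappa\bigl(\rho^2(1-\kappa)-1\bigr)$, which is negative because $\rho^2(1-\kappa)<1$. For $f_1<f_3$, clear the positive factor $D$ and cancel $\kappa>0$ to reduce the claim to $\rho^2-1-\rho^2\kappa+\rho\kappa<0$; rewriting the left-hand side as $(1-\rho)\bigl(\rho\kappa-1-\rho\bigr)$ exhibits it as a positive factor times a negative one. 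For $f_1>-1$ it suffices, again because $D>0$, to show $\kappa(\rho^2-1-\rho^2\kappa)+D>0$; a short expansion gives $\kappa(\rho^2-1-\rho^2\kappa)+D=(1-\kappa)\bigl(1-\rho^2(1-2\kappa)\bigr)$, which is positive since $1-\rho^2(1-2\kappa)\ge 1-\rho^2>0$ when $\kappa\le\tfrac12$ and exceeds $1$ when $\kappa>\tfrac12$.

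For the statements on $S_1,S_2$: substitute the Lemma~\ref{lemma2} expressions $R_1=-f_1-f_3/A$ and $R_2=-f_1-Af_3$ (using $f_1=f_2$), both of which are positive because $f_1,f_3<0$. Inserting these together with the Lemma~\ref{lemma1} formulas for $f_1,f_3$ into the definitions of $S_1,S_2$ and simplifying, the cancellation driven by the identity $\rho^2(1-\kappa)+\kappa-1=-(1-\kappa)(1-\rho^2)$ collapses both expressions to the closed forms
\begin{equation*}
S_1=\frac{\kappa}{D}\Bigl(1-\frac{\rho(1-\kappa)}{A}\Bigr),\qquad
S_2=\frac{\kappa}{D}\bigl(1-A\rho(1-\kappa)\bigr).
\end{equation*}
(Equivalently, these can be read off from the ridge-type representation $\hat\beta_{N}=(X'X+\tau^{-2}I)^{-1}(X'X)\hat\beta_{MLE}$ with $\tau^{-2}=\kappa/(1-\kappa)$, using $S_i=1-\hat\beta_{N,i}/\hat\beta_{MLE,i}$ under the sign convention of Lemma~\ref{lemma2}.) Since $0<\kappa/D<1$ by the bound above and $0<\rho(1-\kappa)/A<1$, the first formula yields $0<S_1<\kappa/D<1$. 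Since $A\rho(1-\kappa)>0$, the second yields $S_2<\kappa/D<1$; note no positive lower bound on $S_2$ holds in general (for large $A$ one can have $S_2<0$), consistent with the lemma asserting only $S_2<1$.

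The only genuinely laborious step is the simplification producing these closed forms: one must carry the rational-in-$(\kappa,\rho)$ expressions for $f_1,f_3$ through the definitions of $R_1,R_2$ and then through the $(1-\rho^2)$-normalized combinations, and recognize the telescoping cancellation. Once the closed forms are available, every remaining inequality is a one-line sign check from $\kappa,\rho\in(0,1)$ and $A>1$.
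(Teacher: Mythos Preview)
Your proof is correct. The treatment of the $f$-inequalities is essentially the same as the paper's (the same factorizations appear), but your handling of $S_1$ and $S_2$ is genuinely different and cleaner. The paper never writes down the closed forms
\[
S_1=\frac{\kappa}{D}\Bigl(1-\frac{\rho(1-\kappa)}{A}\Bigr),\qquad
S_2=\frac{\kappa}{D}\bigl(1-A\rho(1-\kappa)\bigr);
\]
instead it substitutes $R_1,R_2$ into each inequality separately (switching to $\mu=1-\kappa$ for $S_1<1$) and, for $S_1>0$, relies on a forward reference to the inequality $R_1/R_2>(A+\rho)/(A+\rho A^2)$ established only later in the proof of Theorem~\ref{Theorem1}. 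Your derivation avoids that circular-looking dependency, makes all four bounds immediate from the single estimate $0<\kappa/D<1$, and exposes the ridge-regression interpretation. The cost is the upfront simplification producing the closed forms, which you flag as the laborious step; once that identity is in hand, your argument is strictly more transparent than the paper's case-by-case computations.
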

\begin{proof}
$f_1=f_2$ and $f_3<0$ can be directly obtained from the definitions (\ref{f1f2f3}).
\begin{eqnarray}\label{e1}
f_1(\kappa;\rho)>-1
\Leftrightarrow \frac{(\rho^2-1-\rho^2\kappa)\kappa}{1-(1-\kappa)^2\rho^2}>-1
\Leftrightarrow \rho^2\kappa-\kappa-\rho^2\kappa^2>(\kappa^2+1-2\kappa)\rho^2-1.
\end{eqnarray}
For the last inequality in (\ref{e1}), $LHS-RHS=2\rho^2\kappa(1-\kappa)+(1-\rho^2)(1-\kappa)>0$. Thus $f_1>-1$.
\begin{eqnarray}\label{e2}
f_3>f_1
\Leftrightarrow &(\rho^2-1-\rho^2\kappa)\kappa<-\rho\kappa^2.
\end{eqnarray}
For the inequality in (\ref{e2}), $LHS-RHS=(1-\rho)(\rho\kappa^2-\kappa-\rho)<0$,
this leads to $f_3>f_1$.

Since $f_1,f_2$ and $f_3$ are always negative, $R_1$ and $R_2$ are always positive.
\begin{eqnarray}\label{temp}
S_1>0\Leftrightarrow R_1>\frac{\rho R_2}{A}
\Leftrightarrow \frac{R_1}{R_2}>\frac{\rho}{A}.
\end{eqnarray}
We proved
$\frac{R_1}{R_2}>\frac{A+\rho}{A+\rho A^2}$.
in the following proof of Theorem \ref{Theorem1}. Considering the fact $\frac{A+\rho}{A+\rho A^2}>\frac{\rho}{A}$, the last inequality of (\ref{temp}) is true. Thus $S_1>0$.

We prefer to use $\mu=1-\kappa=\tau^2/[1+\tau^2]$ to prove $S_1<1$:
\begin{eqnarray}\label{e3}
S_1<1\Leftrightarrow R_1-\frac{\rho R_2}{A}<1-\rho^2\Leftrightarrow AR_1-\rho R_2<A-A\rho^2.
\end{eqnarray}
For the last inequality in (\ref{e3}), we can finally obtain $LHS-RHS=(1-\rho^2)(\rho\mu^2-\rho\mu+A\rho^2\mu^2-A\mu)$.
Since both $\rho\mu(\mu-1)$ and $A\mu(\rho\mu-1)$ are negative, $LHS-RHS$ is less than 0.
\begin{eqnarray}\label{e4}
S_2<1\Leftrightarrow R_2-\rho R_1A<1-\rho^2.
\end{eqnarray}
For the inequality in (\ref{e4}),
$LHS-RHS=(1-\rho^2)[(A\rho\mu^2-A\rho\mu)+(\rho^2\mu^2-\mu)]$.
Since both $(A\rho\mu^2-A\rho\mu)$ and $(\rho^2\mu^2-\mu)$ are negative, thus $LHS$ is indeed less than $RHS$.
\end{proof}
Now we are ready to give the proof of Theorem \ref{Theorem1}.
\begin{proof}
Case 1: the Theorem is true when $S_2<0$. That is because $S_1$ is always between 0 and 1, causing $|\hat\beta_{N,1}|$ will always be less than $|\hat\beta_{MLE,1}|$. But $|\hat\beta_{N,2}|$ will be greater than $|\hat\beta_{MLE,2}|$ since $(1-S_2)$ must be greater than 1 in this case.

Case 2: when both $S_1$ and $S_2$ are between 0 and 1, we have
\begin{eqnarray}\label{e5}
(\ref{theorem1res})\Leftrightarrow \frac{S_1}{S_2}>1\Leftrightarrow \frac{R_1-\frac{\rho R_2}{A}}{R_2-\rho R_1A}>1
\Leftrightarrow AR_1-\rho R_2> AR_2-\rho R_1A^2.
\end{eqnarray}

For the last inequality in (\ref{e5}), substituting $R_1$ and $R_2$ derived from Lemma \ref{lemma2}, we can obtain $LHS-RHS=\rho(1-\rho^2)(A^2-1)(1-\kappa)>0$.

Note this verifies $\frac{R_1}{R_2}>\frac{A+\rho}{A+\rho A^2}$, which is needed for proving $S_1>0$ earlier.
\end{proof}
\begin{lemma}\label{hforhs}
The function $h$ is defined as (\ref{functionH}), under the model (\ref{eq:model}) with the horseshoe prior on $\beta$, then
\begin{equation}\label{res3}
h(x_1,x_2)=C\int_{\kappa_1,\kappa_2}F(\kappa_1,\kappa_2;\rho)E(\kappa_1,\kappa_2;\rho,x_1,x_2)d\kappa_1d\kappa_2,
\end{equation}
where $\kappa_i=1/[1+\tau^2\lambda_i^2]$, $i=1,2$, $C$ is a constant independent from $(x_1,x_2,\rho,\lambda_1,\lambda_2)$, and

$F(\kappa_1,\kappa_2;\rho)=[1-(1-\kappa_1)(1-\kappa_2)\rho^2]^{-\frac{1}{2}}[1-(1-\tau^2)\kappa_1]^{-1}[1-(1-\tau^2)\kappa_2]^{-1}(1-\kappa_1)^{-\frac{1}{2}}(1-\kappa_2)^{-\frac{1}{2}}$

$E(\kappa_1,\kappa_2;\rho,x_1,x_2)=\exp\begin{Bmatrix}
\frac{1}{2\sigma^2}(f_1x_1^2+f_2x_2^2+2f_3x_1x_2)
\end{Bmatrix},$

with

$f_i(\kappa_1,\kappa_2;\rho)=[(\rho^2-1-\rho^2\kappa_{3-i})\kappa_i][1-(1-\kappa_1)(1-\kappa_2)\rho^2]^{-1}\ \ \ i=1,2$

$f_3(\kappa_1,\kappa_2;\rho)=-\rho\kappa_1\kappa_2[1-(1-\kappa_1)(1-\kappa_2)\rho^2]^{-1}$.

It follows that the horseshoe estimator can be represented as the right hand side of (\ref{lemma2res}), where $R_i$ is
\begin{equation}\label{res4}
R_i(x_1,x_2)=-\frac{1}{x_i}\frac{\int_{\kappa_1,\kappa_2}[f_i(\kappa_1,\kappa_2)x_i+f_3(\kappa_1,\kappa_2)
x_{3-i}]F(\kappa_1,\kappa_2)E(\kappa_1,\kappa_2)d\kappa_1d\kappa_2}{\int_{\kappa_1\kappa_2}F(\kappa_1,\kappa_2)
E(\kappa_1,\kappa_2)d\kappa_1d\kappa_2}\ \ \ i=1,2.
\end{equation}
\end{lemma}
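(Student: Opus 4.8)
The plan is to reduce both assertions to Lemmas~\ref{lemma1} and~\ref{lemma2} by conditioning on the local scales $\lambda_1,\lambda_2$, running the Gaussian computation of Lemma~\ref{lemma1} with the two prior variances allowed to differ, and then integrating the $\lambda_i$ out against their half-Cauchy priors. First I would write the horseshoe prior as the scale mixture $\pi(\beta)=\int\prod_{i=1,2}N(\beta_i;0,\sigma^2\tau^2\lambda_i^2)\,p(\lambda_i)\,d\lambda_i$ with $p$ the half-Cauchy density and $\tau$ held fixed, and, since every integrand is nonnegative, interchange the $\beta$- and $\lambda$-integrations in \eqref{functionH} to obtain $h(x_1,x_2)=\int h(x_1,x_2\mid\lambda_1,\lambda_2)\,p(\lambda_1)p(\lambda_2)\,d\lambda_1 d\lambda_2$, where $h(\cdot\mid\lambda_1,\lambda_2)$ is the marginal of $x$ under a Gaussian prior on $\beta$ with variances $\sigma^2\tau^2\lambda_i^2$.

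The conditional object $h(x_1,x_2\mid\lambda_1,\lambda_2)$ is exactly the quantity of Lemma~\ref{lemma1} with unequal prior variances; equivalently it is the Gaussian convolution $\int N(x;\beta,\sigma^2(X'X)^{-1})N(\beta;0,D)\,d\beta=N(x;0,\sigma^2(X'X)^{-1}+D)$ with $D=\mathrm{diag}(\sigma^2\tau^2\lambda_1^2,\sigma^2\tau^2\lambda_2^2)$. Writing $\kappa_i=1/(1+\tau^2\lambda_i^2)$, so $\tau^2\lambda_i^2=(1-\kappa_i)/\kappa_i$, one computes $\det(\sigma^2(X'X)^{-1}+D)\propto\sigma^4(\kappa_1\kappa_2)^{-1}[1-(1-\kappa_1)(1-\kappa_2)\rho^2]$, giving a normalizing factor $\propto\sigma^{-2}(\kappa_1\kappa_2)^{1/2}[1-(1-\kappa_1)(1-\kappa_2)\rho^2]^{-1/2}$, while inverting the same matrix and completing the square in $(x_1,x_2)$ reproduces the exponent $\tfrac{1}{2\sigma^2}(f_1x_1^2+f_2x_2^2+2f_3x_1x_2)$ with $f_1,f_2,f_3$ the displayed functions of $(\kappa_1,\kappa_2,\rho)$ — these reduce to the $f_i$ of Lemma~\ref{lemma1} when $\kappa_1=\kappa_2$, a useful sanity check. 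That identifies $E(\kappa_1,\kappa_2;\rho,x_1,x_2)$. Changing variables $\lambda_i\mapsto\kappa_i$ in the outer integral, the Jacobian $|d\lambda_i/d\kappa_i|\propto\kappa_i^{-3/2}(1-\kappa_i)^{-1/2}$ times the half-Cauchy density at $\lambda_i^2=(1-\kappa_i)/(\tau^2\kappa_i)$, which is $\propto\kappa_i[1-(1-\tau^2)\kappa_i]^{-1}$, contributes $\propto[1-(1-\tau^2)\kappa_i]^{-1}\kappa_i^{-1/2}(1-\kappa_i)^{-1/2}$ per coordinate; multiplying by the $(\kappa_1\kappa_2)^{1/2}$ from the normalizer cancels the $\kappa_i^{-1/2}$ and leaves exactly $F(\kappa_1,\kappa_2;\rho)$, all remaining $(x_1,x_2,\tau,\rho,\lambda)$-free prefactors (including the $(1-\rho^2)^{1/2}$ from the Gaussian normalizers, which is dropped under the same convention as in the proof of Lemma~\ref{lemma1} and in any case cancels in the ratio below) absorbed into $C$. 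This proves \eqref{res3}.

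For the estimator representation I would differentiate $h$ under the integral sign — legitimate since on any compact $(x_1,x_2)$-set $\partial_{x_i}(FE)$ is dominated by an integrable function of $(\kappa_1,\kappa_2)$ — to get $\partial_{x_i}h=C\sigma^{-2}\int(f_ix_i+f_3x_{3-i})FE\,d\kappa_1 d\kappa_2$, so that $R_i^*(x_1,x_2):=-x_i^{-1}(\partial_{x_i}h)/h$ equals $\sigma^{-2}$ times minus the ratio in \eqref{res4}. Then, exactly as in the proof of Lemma~\ref{lemma2}, Proposition~1 of \cite{griffin2010inference} — i.e.\ $\hat\beta_N=\hat\beta_{MLE}+\sigma^2(X'X)^{-1}\nabla\log h(\hat\beta_{MLE})$, with $S(\hat\beta)=(X'X)^{-1}[R_1,R_2]$ and $R_i=\sigma^2R_i^*$ — yields the representation \eqref{lemma2res} with $R_i$ as in \eqref{res4}.

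The main obstacle is not conceptual but bookkeeping: confirming that the $2\times 2$ determinant collapses to $1-(1-\kappa_1)(1-\kappa_2)\rho^2$ up to the $\kappa_i$ and $(1-\rho^2)$ factors, that inverting the covariance and completing the square produces precisely the stated $f_i$ (with $\kappa_{3-i}$, not $\kappa_i$, appearing inside $f_i$), and that the half-Cauchy density times its Jacobian contributes exactly the factor $[1-(1-\tau^2)\kappa_i]^{-1}$ — this last point being where the precise parametrization of the prior and of $\kappa_i$ must be tracked carefully.
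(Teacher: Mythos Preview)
Your proposal is correct and follows essentially the same route as the paper: condition on $(\lambda_1,\lambda_2)$, integrate out $\beta$ via the Gaussian computation of Lemma~\ref{lemma1} generalized to unequal prior variances, then change variables $\lambda_i\mapsto\kappa_i$ against the half-Cauchy prior, and finally invoke Proposition~1 of \cite{griffin2010inference} after differentiating under the integral. The only cosmetic difference is that you evaluate the inner $\beta$-integral in one stroke as the Gaussian convolution $N(x;0,\sigma^2(X'X)^{-1}+D)$ and then read off the determinant and inverse, whereas the paper re-runs the sequential completion-of-squares of Lemma~\ref{lemma1}; your version is a bit cleaner but leads to the same $E$ and the same $\sqrt{\kappa_1\kappa_2}/\sqrt{1-(1-\kappa_1)(1-\kappa_2)\rho^2}$ normalizer, and your bookkeeping for the Jacobian and the half-Cauchy density indeed reproduces the paper's $\pi(\kappa_i)$ and hence $F$.
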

\begin{proof}
With the horseshoe prior being applied, the $\pi(\beta)$ in (\ref{functionh}) becomes
\begin{align*}
\pi(\beta)=\int_{\lambda}\pi(\beta|\lambda)\pi(\lambda)d\lambda
=\int_{\lambda_1,\lambda_2}N(\beta_1;0,\sigma^2\tau^2\lambda_1^2)N(\beta_2;0,\sigma^2\tau^2\lambda_2^2)\pi(\lambda_1)\pi(\lambda_2)d\lambda_1d\lambda_2
\end{align*}
The function $h$ now is an integral with respect to $(\beta_1,\beta_2,\lambda_1,\lambda_2)$. The trick here is to integrate $\beta_1,\beta_2$ out, then $h$ will be an integral with respect to $\lambda_1,\lambda_2$ only:
\begin{equation}\label{functionh1}
h(x)=\int_{\lambda}\pi(\lambda)\begin{Bmatrix}
\int_{\beta}N(x;\beta,\sigma^2(X'X)^{-1})\pi(\beta|\lambda)d\beta
\end{Bmatrix}d\lambda.
\end{equation}

Note $\pi(\beta|\lambda)$ is a two-dimension normal distribution, with $\kappa_1$ and $\kappa_2$ defined in this Lemma, we can easily integrate $\beta_1,\beta_2$ out through the similar approach as we did in the proof of Lemma \ref{lemma1}, and the integral with respect to $\beta$ inside the braces above is
\begin{eqnarray*}
\int_{\beta}=C'\frac{\sqrt{\kappa_1\kappa_2}}{\sqrt{1-(1-\kappa_1)(1-\kappa_2)\rho^2}}E(\kappa_1,\kappa_2;\rho,x_1,x_2),
\end{eqnarray*}
where $C'$ is a constant independent from $(x_1,x_2,\rho,\lambda_1,\lambda_2)$.

The prior on $\lambda_i$ is
$\pi(\lambda_i)=2/[\pi(1+\lambda_i^2)]$
then the prior on $\kappa_i$ is followed as:
\begin{eqnarray*}
\pi(\kappa_i)=\frac{\tau}{\pi}\frac{1}{1-(1-\tau^2)\kappa_i}(1-\kappa_i)^{-\frac{1}{2}}\kappa_i^{-\frac{1}{2}}.
\end{eqnarray*}
Substituting $\int_{\beta}$ and $\pi(\kappa_i)$ back to (\ref{functionh1}), (\ref{res3}) is obtained. Considering
\begin{eqnarray*}
\frac{\partial}{\partial x_i}h(x_1,x_2)=C\sigma^{-2}\int_{\kappa_1,\kappa_2}(f_ix_i+f_3x_{3-i})F(\kappa_1,\kappa_2)E(\kappa_1,\kappa_2)d\kappa_1d\kappa_2
\end{eqnarray*}
and defining $R_i=\sigma^2R_i^*-(\frac{\partial}{\partial x_i}h)/(x_ih)$, use the result of Proposition 1 in \cite{griffin2010inference} again, (\ref{res4}) is obtained.
\end{proof}

\bibliographystyle{biometrika}
\bibliography{NSF_refs}
\end{document}